\documentclass[a4paper,onecolumn,11pt,accepted=2023-06-27]{quantumarticle}

\pdfoutput=1

\usepackage[utf8]{inputenc}
\usepackage[english]{babel}
\usepackage[T1]{fontenc}

\usepackage{hyperref}
\hypersetup{breaklinks=true}


\PassOptionsToPackage{usenames, dvipsnames}{color}

\usepackage{graphicx}
\usepackage{subcaption}
\usepackage{float}

\usepackage{tikz}
\usepackage{pgfplots}
\pgfplotsset{compat=1.18}

\usetikzlibrary{angles,quotes}
\usetikzlibrary{decorations.pathmorphing,arrows.meta}
\usepackage{bbm}
\usepackage{amsmath}
\usepackage{amssymb}
\usepackage{amsthm}
\usepackage{mathtools}

\providecommand\given{}
\newcommand\SetSymbol[1][]{%
  \nonscript\:#1\vert
  \allowbreak
  \nonscript\:
  \mathopen{}
}

\DeclarePairedDelimiterX\pSet[1]{\{}{\}}{%
  \renewcommand\given{\SetSymbol[\delimsize]}
  #1
}

\ifdefined\skiptheoremsetup\else
    \newtheorem{theorem}{Theorem}

    \newtheorem{lemma}[theorem]{Lemma}

    \newtheorem{definition}{Definition}

\fi

\newcommand{\N}{\mathbb{N}}

\newcommand{\Z}{\mathbb{Z}}

\newcommand{\calE}{\mathcal{E}}

\newcommand{\calG}{\mathcal{G}}
\newcommand{\calH}{\mathcal{H}}
\newcommand{\calI}{\mathcal{I}}

\newcommand{\calP}{\mathcal{P}}

\newcommand{\calU}{\mathcal{U}}

\newcommand{\vv}[1]{\mathbf{#1}}

\usepackage{xparse}

\newcommand{\bigO}{\mathcal{O}}

\providecommand\given{}

\DeclarePairedDelimiterX\pBrackets[1]{[}{]}{%
\renewcommand\given{\SetSymbol[\delimsize]}
#1
}

\newcommand{\Prob}[2][]{
 \def\tmp{#1}%
   \ifx\tmp{}
     \operatorname{Pr}\pBrackets*{#2}
   \else
     \operatorname{Pr}_{#1}\pBrackets*{#2}
   \fi
}

\newcommand{\Expect}[2][]{
 \def\tmp{#1}%
   \ifx\tmp{}
     \operatorname{\mathbb{E}}\pBrackets*{#2}
   \else
     \operatorname{\mathbb{E}}_{#1}\pBrackets*{#2}
   \fi
}

\usepackage{physics}
\usepackage{braket}

\usetikzlibrary{quantikz}

\renewcommand{\ket}[1]{| #1 \rangle}
\renewcommand{\bra}[1]{\langle #1 |}

\renewcommand{\proj}[1]{\ket{#1}\bra{#1}}
\renewcommand{\ketbra}[2]{\ket{#1}\bra{#2}}

\newcommand{\vspan}[1]{\operatorname{span}\{ #1 \}}

\renewcommand{\hom}{\mathrm{Hom}}

\newcommand{\id}{\mathbbm{1}}
\renewcommand{\tr}{\operatorname{tr}}

\newcommand{\sL}[2]{{\mathfrak{s l}(2, \mathbb{C})}}

\usepackage{ifthen}

\newcommand\curRevision{1}
\newcommand{\currentrevision}[1]{
    \def\curRevision{#1}
}

\newcommand{\revdtext}[2][1]{\ifthenelse{\equal{#1}{\curRevision}}{\textcolor{blue}{#2}}{#2}}

\currentrevision{2}

\usetikzlibrary{shapes.geometric, arrows,arrows.meta, positioning}

\begin{document}

\title{On hitting times for general quantum Markov processes}

\author{Lorenzo Laneve}
\email{lorenzo.laneve@usi.ch}
\affiliation{Faculty of Informatics — Universit\`a della Svizzera Italiana, 6900 Lugano, Switzerland}
\affiliation{IBM Quantum, IBM Research — Zurich, S\"aumerstrasse 4, 8803 R\"uschlikon, Switzerland}

\author{Francesco Tacchino}
\email{fta@zurich.ibm.com}
\affiliation{IBM Quantum, IBM Research — Zurich, S\"aumerstrasse 4, 8803 R\"uschlikon, Switzerland}

\author{Ivano Tavernelli}
\email{ita@zurich.ibm.com}
\affiliation{IBM Quantum, IBM Research — Zurich, S\"aumerstrasse 4, 8803 R\"uschlikon, Switzerland}

\begin{abstract}
Random walks (or Markov chains) are models extensively used in theoretical computer science. Several tools, including analysis of quantities such as hitting and mixing times, are helpful for devising randomized algorithms. A notable example is Sch\"oning's algorithm for the satisfiability (SAT) problem. In this work, we use the density-matrix formalism to define a \emph{quantum Markov chain} model which directly generalizes classical walks, and we show that a common tool such as hitting times can be computed \revdtext{analytically} with a formula similar to the one found in the classical theory, which we then apply to known quantum settings such as Grover's algorithm.
\end{abstract}

\section{Introduction}
Random walks (or Markov chains) are stochastic models that are extensively used in theoretical computer science. \revdtext{Classically, a random walk is defined through a graph, where the nodes are the possible states of the process and edges represent possible transitions. At each step, an outgoing edge from the current state is chosen according to some probability distribution, and the corresponding state is reached. A theory of Markov chains underlies the analysis of many algorithms:} a notable example is Sch\"oning's algorithm, one of the best known classical algorithms for the satisfiability (SAT) problem~\cite{schoning99}. \revdtext{An important property of Markov chains is the so-called \emph{hitting time}, which quantifies how many steps of the walk (in expectation) we need to execute in order to reach, or hit some fixed target state, given some initial conditions.} The analysis of hitting times is a powerful tool in search problems~\cite{childsgoldstone1, kroviadiabatic, magniezsearch, chakrabortysearch}, as these quantities are usually tightly related to complexity metrics. \revdtext{As an example, consider the satisfiability problem: given a formula $F(x)$, we start from some assignment $x^0$ (e.g., $x^0 = (0, \ldots, 0)$) and, at each step, we choose a variable to flip uniformly at random. This can be formalized as a random walk on an hypercube and, given a satisfying assignment $x^*$ for $F$, the hitting time from $x^0$ to $x^*$ tells us the number of steps needed on average to reach such assignment. An algorithm running this Markov chain and checking at each step whether the current state satisfies $F$ has time complexity that is proportional to the hitting time.}

\revdtext{Over the past decades, several research efforts were devoted to extending the concept of random walk to a quantum setting, with the aim of possibly achieving certain speedups leveraging quantum effects, such as constructive/destructive interference.} Notable examples are the works by Kempe~\cite{kempeintro, kempehittingtimes}, where the \emph{unitary walk model} is introduced: such a walk consists in repeatedly applying a unitary transformation to a pure (finite-dimensional) quantum state
\begin{align*}
    \ket{\phi_{t+1}} = U \ket{\phi_t}
\end{align*}
\revdtext{where the elements of the computational basis can be seen as vertices of a graph.} While many interesting results have been proven over the years, including novel algorithms~\cite{ambainisdistinctness, subsetfinding, kempehittingtimes, kempesearch, layden2022, mazzolasampingrates} and general constructions for quadratic speed-ups~\cite{aharonovmixing, szegedymixingspeedup, tulsi2008, ambainisquadraticspeedup2019}, the unitary walk model cannot by itself be considered a full generalization of classical random walks, as unitary evolution is purely reversible. Moreover, effects such as decoherence and classical randomness cannot be analyzed as part of the behaviour of these walks. \revdtext{Characterising the effect noise is not only relevant for implementations: in fact, Kendon and Tregenna~\cite{kendonreview, kendontregennanumerical} presented a series of numerical simulations in different settings, showing that decoherence can play a central role in providing additional speed-up.} \revdtext{Thus, in order to construct a complete theory that unifies unitary walks and classical Markov chains, it is necessary to consider open-system dynamics. Among these models, which use density matrices and completely positive (CP) maps~\cite{nielsenchuang, budini2004}, one can find the transition operation matrix (TOM) formalism~\cite{guddertomtem} and quantum stochastic walks~\cite{qswframework}. 
The interested reader can find more comprehensive reviews of quantum walk models in~\cite{kadianreview, venegasreview}, where several definitions and applications are listed.}

\revdtext{Even within a generic quantum setting, defining quantum hitting times is not straightforward. One has for instance to consider that checking through a quantum measurement whether the target state was hit will in general perturb the walk process itself. 
As a consequence, different notions of quantum hitting times have been formalized, depending on how they overcome the measurement problem~\cite{kempehittingtimes,szegedymixingspeedup,magniezhitting,Atia_2021,ambainis2004coins}. The definitions proposed in the literature can essentially be grouped into two classes. The first one is a \emph{one-shot} version, based on the first point in time where the overlap with the target state reaches some predefined threshold. The second is instead a so-called \emph{concurrent} definition, in which one checks at each step whether the target state was hit (a measurement whose observable has only two distinct eigenvalues). 
Clearly, these two formal definitions are diametrically opposed, in the sense that the former only requires one full final measurement at the end of the process, thus fully preserving the coherent behaviour of the walk, while the latter carries out one-bit measurements in between each execution of the walk.} Although the one-shot definitions are perfectly exploitable in algorithms, they are harder to analyze. In particular, since they are far from the usual classical definition, one cannot directly apply \revdtext{analytical} results known for classical hitting times. \revdtext{On the other hand, the concurrent formalization can be more clearly interpreted as a direct generalization of the classical definition, but require a specific analysis of the robustness of potential quantum speedups with respect to frequent measurement processes~\cite{barkaimeasurementinduced,barkaipassagetime,tornowmeasurementinduced}.}

\revdtext{In this work, we first introduce (Section~\ref{sec:quantum-markov-chains}) a discrete version of the density matrix-based quantum stochastic walk framework~\cite{qswframework} and show that not only it represents the most natural generalisation of the classical walk model, but that it also allows us to easily describe the effect of noise and measurements as quantum maps. Based on this formalism, we then propose (Sections~\ref{sec:hitting-times} and~\ref{sec:generalized-hitting-times}) an extension of the notion of concurrent hitting time, originally given by Kempe~\cite[Def.~3.3]{kempehittingtimes} only for the case of unitary walks, which circumvents the measurement problem by randomising the times at which the process is observed and reusing post-measurement states. We leverage the algebra of completely positive (CP) maps to derive a clean analytical formula to compute such hitting time, which we find to be completely analogous and consistent with its classical counterpart. Finally, we demonstrate some concrete applications of our results to some specific examples (Sections~\ref{sec:grover-algorithm} and~\ref{sec:hitting-times-cycle}), including quantum walks on cyclic graphs and a version of Grover's algorithm.}

\section{Quantum Markov Chains}
\label{sec:quantum-markov-chains}
In classical theory, the notion of random walk is formalized using a stochastic process called \emph{Markov chain}.
\begin{definition}
    A Markov chain is a tuple $(S, P)$ where
    \begin{itemize}
        \item $S$ is a finite set of states;
        \item $P \in [0,1]^{|S|\times|S|}$ is a stochastic matrix, i.e., a entry-wise non-negative matrix where each row sums up to $1$.
    \end{itemize}
    The matrix $P$ defines how a general probability distribution of states evolves at the next step of the process.
\end{definition}
The system is in a state $X_t \in S$ at time $t \in \N$, where $X_t$ follows a distribution $q_t$, which can be seen as a (row) vector in $S$, whose entries are non-negative and sum up to $1$ (the so-called \emph{state probability vector}). The stochastic matrix $P$ is used to retrieve the distribution $q_{t+1}$ of the state in the next time step:
\begin{align*}
    q_{t+1} = q_t P \Longrightarrow q_t = q_0 P^t .
\end{align*}
Therefore, a classical random walk can be seen as a (linear) evolution of a state $q$. A more detailed introduction to the topic can be found, e.g., in~\cite{norrismarkovchains}.

In order to analyze quantum random walks, we need a suitable model that conveniently defines such processes. 
As argued in the introduction, the unitary model is purely reversible and thus it cannot simulate a classical Markov chain (unless the stochastic matrix $P$ of the chain is a permutation of the standard basis vectors and every probability distribution is trivial). Moreover, works by Kendon and Tregenna~\cite{kendonreview, kendontregennanumerical} involving numerical experiments suggest that classical randomness and noise may even accelerate the behaviour of quantum walks in some cases, and such feature cannot be expressed in the unitary model.

In this work we aim at using a more general model to formalize the process using \revdtext{completely positive (CP)} maps and density matrices. This approach can also be seen as the discrete-time counterpart of the quantum stochastic walk~\cite{qswframework}.
\begin{definition}
    A quantum Markov chain consists of a tuple $(\calH, \calE)$ where:
    \begin{itemize}
        \item $\calH$ is a finite-dimensional Hilbert space;
        \item $\calE: \hom(\calH) \rightarrow \hom(\calH)$ is a completely positive trace-preserving map.
    \end{itemize}
    \revdtext{Here $\hom(\calH)$ represents the set of homomorphism of $\calH$ onto itself, i.e., the set of the corresponding square matrices.}
    The evolution is defined through a repeated application of a CP map $\calE$ onto a (generally mixed) quantum state in~$\calH$.
\end{definition}
We show that this formalization is a straightforward extension of the classical Markov model: indeed, the density matrix formalism can be seen as a state probability vector (the diagonal of the matrix) completed with quantum coherences. Therefore, a general quantum random walk is simply a (linear) transformation of a density matrix:
\begin{align*}
    \rho_{t+1} = \calE(\rho_t) \Longrightarrow \rho_t = \calE^t(\rho_0) .
\end{align*}
This model has several advantages over the unitary formalization: quantum channels can represent any hybrid quantum-classical process. In particular, an arbitrary Markov chain $(S, P)$ can be simulated by a quantum Markov chain $(\calH, \calE)$ where $\calH = \vspan{\ket{x} : x \in S}$ and
\begin{align}
    \label{eq:quantum-to-classical-construction}
    \calE(\rho) = \sum_{x,y \in S} [P]_{xy} \ketbra{y}{x} \rho \ketbra{x}{y} .
\end{align}
This construction projects all our quantum states onto a fixed measurement basis (in this case, the computational basis), and any quantum coherence is ignored, leaving us with a classical Markovian dynamics.
On the other hand, also unitary walks are particular cases of this model, as the application of a unitary can always be represented through a quantum channel. Therefore, this model unifies the two theories, while its expressiveness allows to include classical random choices and decoherence into the picture. Moreover, there could be some type of quantum walk that lies in the middle between fully coherent and fully classical processes. For example, Kendon et al.~\cite{kendontregennanumerical} already showed that such hybrid dynamics could be beneficial in some contexts.
\section{Hitting Times}
\label{sec:hitting-times}
Using the formalism introduced above, it is possible to extend classical results to the quantum setting by analogy. For example, if we translate a definition of hitting times proposed by Kempe (in particular, see Definition 3.3 in~\cite{kempehittingtimes}) to this model, we obtain a natural generalization of the classical definition of hitting time, and a well-known formula for its computation can be derived for the quantum case with analogous methods.

The classical setting is as follows: we start from a state $X_0$ distributed according to a state probability vector $q$, and we want to reach a given state $z \in S$. We want to quantify how many steps, in expectation, we need to run before reaching such state.
\begin{definition}
    \label{def:classical-ht}
    Given a Markov chain $(S, P)$, a starting state $q$ and a target state $z \in S$, define the following random variable:
    \begin{align*}
        T_z(q) = \min\{ t \ge 0 : X_t = z, X_0 \sim q \} .
    \end{align*}
    The $z$-hitting time is defined as the expectation of such variable:
    \begin{align*}
        h_z(q) := \Expect{T_z(q)} .
    \end{align*}
\end{definition}
\noindent A more complete approach on the classical theory of hitting times can be also found in~\cite{norrismarkovchains}.

For the quantum case, we can extend a definition proposed by Kempe for unitary walks to our model. In particular, given a quantum Markov chain $(\calH, \calE)$ and a target state $\ket{z}$, we consider a process in which we alternate applications of the evolution map $\calE$ with a measurement of the observable
\begin{align*}
    O_z = \Pi_z - \Pi_{-z} .
\end{align*}
where $\Pi_z := \ketbra{z}{z}$ and $\Pi_{-z} := \id - \ketbra{z}{z}$. If such measurement returns $+1$, then the state will collapse to our target state, and the process will stop. Otherwise, we repeat the procedure by executing $\calE$ and trying another measurement.
\begin{definition}
    \label{def:quantum-ht}
    Let $(\calH, \calE)$ be a quantum Markov chain, and let $\ket{z} \in \calH$ be an arbitrary state. Considering the procedure described above, let $M_i$ be a random variable equal to $1$ if and only if the $i$-th measurement makes the state collapse to $\ket{z}$, the quantum hitting time $h_z(\rho)$ is defined as the expectation of
    \begin{align*}
        T_z(\rho) = \min\{ t \ge 0 \,|\, M_t = 1 \}
    \end{align*}
    \revdtext{where $\rho$ is the initial state of the chain.}
\end{definition}
One can see that this definition is closely related the classical one: in fact, both Definitions~\ref{def:classical-ht} and~\ref{def:quantum-ht} imply a check on the state to see whether $z$ was reached or not, and this bit of information is the only one we use in order to decide whether to stop the algorithm. However, in the quantum case some quantum coherences are in general destroyed at each step. It is worth highlighting here that such measurement is nevertheless different from a full collapse of the quantum state, which would always return a probabilistic mixture of measurement eigenstates, hence yielding a classical Markov process over the elements of the measurement basis itself (exactly what we did in Eq.~(\ref{eq:quantum-to-classical-construction})). \revdtext{We show a way to carry out such partial maeasurement using an ancilla qubit in Section~\ref{sec:qiskit-implementation}.}

We now derive a formula for computing such hitting times in the quantum general case. Consider the projector map $\calP_{-z}(\rho) := \Pi_{-z} \rho \Pi_{-z}$, which basically removes from a state $\rho$ the probability mass in the subspace spanned by our target state $\ket{z}$. Physically, $\calP_{-z}$ represents the action of measuring with $O_z$ and halting when its outcome is $+1$. For the rest of the work we define $\rho_{-z} := \calP_{-z}(\rho)$ and $\calE_{-z} := \calP_{-z} \circ \calE \circ \calP_{-z}$.
\begin{theorem}
    \label{thm:quantum-ht-formula}
    Let $(\calH, \calE)$ be a quantum Markov chain and fix a target state $\ket{z}$. Moreover, suppose that each eigenvalue $\lambda$ of the map $\calE_{-z}$ satisfies $|\lambda| < 1$. The following holds:
    \begin{align*}
        h_z(\rho) = \tr\left[ (\calI - \calE_{-z})^{-1} (\rho_{-z}) \right] .
    \end{align*}
    \revdtext{where $\calI$ is the identity map.}
\end{theorem}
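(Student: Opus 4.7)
The plan is to use the tail-sum identity $\Expect{T_z(\rho)} = \sum_{k \ge 1} \Pr[T_z \ge k]$ and evaluate each survival probability by tracking the \emph{unnormalized} post-measurement state across the sequence of events defining $T_z$. Operationally, $\calP_{-z}$ encodes conditioning on the $-1$ outcome of an $O_z$-measurement: starting from a state $\sigma$, the unnormalized state after such an outcome is $\calP_{-z}(\sigma)$, and its trace equals the probability of that outcome. The process defining $T_z$ begins with a measurement at time $0$ and then alternates between applications of $\calE$ and further measurements, so the unnormalized state after surviving the non-hit outcomes at times $0, 1, \ldots, k-1$ is $(\calP_{-z} \circ \calE)^{k-1}(\rho_{-z})$ for every $k \ge 1$.

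Next, I would observe that this coincides with $\calE_{-z}^{k-1}(\rho_{-z})$: the right-hand $\calP_{-z}$ inside $\calE_{-z} = \calP_{-z} \circ \calE \circ \calP_{-z}$ acts as the identity on any state supported in the no-hit subspace, which holds by induction starting from $\rho_{-z} = \calP_{-z}(\rho)$. Taking traces, $\Pr[T_z \ge k] = \tr[\calE_{-z}^{k-1}(\rho_{-z})]$, and exchanging summation and trace by linearity and continuity yields
$$ h_z(\rho) \;=\; \sum_{k=1}^{\infty} \tr\!\left[\calE_{-z}^{k-1}(\rho_{-z})\right] \;=\; \tr\!\left[\sum_{k=0}^{\infty} \calE_{-z}^{k}(\rho_{-z})\right]. $$
The spectral hypothesis that every eigenvalue $\lambda$ of $\calE_{-z}$ satisfies $|\lambda| < 1$ then closes the calculation via the Neumann series: $\calI - \calE_{-z}$ is invertible on $\hom(\calH)$ and $\sum_{k \ge 0} \calE_{-z}^{k}$ converges in operator norm to $(\calI - \calE_{-z})^{-1}$, yielding the claimed formula.

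The main obstacle is essentially confined to the first step: cleanly identifying $\calE_{-z}$ with the one-round ``survive-and-advance'' operator on unnormalized states, and verifying that the symmetric form $\calP_{-z} \circ \calE \circ \calP_{-z}$ (with a projector on \emph{both} sides) does not over-count projections when iterated, because the inner $\calP_{-z}$ is absorbed by the no-hit support of the preceding state. Everything else reduces to standard operator-algebraic facts, and the final expression is in transparent analogy with the classical formula $h_z(q) = q_{-z}(I - P_{-z})^{-1}\mathbf{1}$ under the dictionary probability vector $\leftrightarrow$ density matrix, substochastic matrix $\leftrightarrow$ restricted CP map, and pairing with $\mathbf{1}$ $\leftrightarrow$ trace.
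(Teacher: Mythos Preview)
Your argument is correct, and it takes a genuinely different route from the paper's proof. The paper conditions on the outcome of the first measurement to derive a functional recurrence for $h_z(\rho)$, then makes an ``educated guess'' that $h_z$ is linear in $\rho$ in order to simplify the recursion, unrolls it $N$ times, and uses the spectral hypothesis to kill the remainder $h_z(\calE \circ \calE_{-z}^N(\rho))$ before summing the geometric series. You instead go straight to the tail-sum identity $\Expect{T_z} = \sum_{k \ge 1}\Pr[T_z \ge k]$ and compute each survival probability by propagating the unnormalized conditional state; the idempotence observation $\calP_{-z}^2 = \calP_{-z}$ that lets you rewrite $(\calP_{-z}\circ\calE)^{k-1}(\rho_{-z})$ as $\calE_{-z}^{k-1}(\rho_{-z})$ is exactly the same trick the paper uses when collapsing its unrolled recursion. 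What your approach buys is that linearity in $\rho$ is never assumed or invoked: you compute probabilities directly, so there is no need for a guess that must be validated a posteriori. What the paper's recursive setup buys is a template that transfers verbatim to the generalized hitting time of Theorem~\ref{thm:generalized-ht-formula}, where the step map becomes $\calE^\sigma$ and an extra factor $\Expect[\sigma]{T}$ drops out of the recursion naturally; adapting the tail-sum argument to that setting is possible but slightly less mechanical. The paper in fact remarks, right after its proof, that the final series admits precisely the tail-sum reading you exploit from the outset.
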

Before proving Theorem~\ref{thm:quantum-ht-formula}, let us give an interpretation of the results. \revdtext{First of all, notice that any eigenvalue of $\calE$ (without projections) associated to an \emph{Hermitian} eigenstate must be exactly $1$, as the map is trace-preserving. As an example, if $\calE(\rho) = U \rho U^\dag$, then $\calE(\ket{\phi_i}\bra{\phi_j}) = \lambda_i \lambda_j^* \ket{\phi_i}\bra{\phi_j}$, where $\ket{\phi_k}$ is the eigenstate of $U$ associated with the eigenvalue $\lambda_k$. On the other hand, eigenvalues associated to non-diagonal eigenoperators can be unitary (as in the case we have just shown), or even $0$ (as in the example of Eq.~(\ref{eq:quantum-to-classical-construction})).} The requirement of Theorem~\ref{thm:quantum-ht-formula} on the eigenvalues of $\calE_{-z}$ has a precise meaning: under the assumption $|\lambda| < 1$ for the eigenvalues of $\calE_{-z}$, each eigenvalue of $(\calI - \calE_{-z})$ is non-zero, and thus the map is invertible, implying that the hitting time is finite. On the other hand, a unitary eigenvalue would suggest that the corresponding eigenstate $\rho$ is completely outside of the target subspace $\vspan{\ket{z}}$. As a consequence, for any $k$, also $\calE^k(\rho)$ will always have zero overlap with $\ket{z}$. Under these conditions, each measurement would give zero probability of collapsing to the target state, and the hitting time would be infinite.

A second observation is that this result is in full correspondence with the known classical counterpart: indeed, for a classical Markov chain $(S, P)$ the formula for the hitting time as in Definition~\ref{def:classical-ht} (as used, e.g., by Magniez et al.~\cite{magniezhitting}) is:
\begin{align*}
    h_z(q) = q_{-z} (\id - P_{-z})^{-1} \vv{1} .
\end{align*}
where $P_{-z}$ (resp.\ $q_{-z}$) is taken from $P$ (resp.\ $q$) by zeroing out the columns/rows corresponding to $z$, and $\vv{1}$ is the vector of all ones. This suggests that our formalization of the quantum hitting time is indeed a natural extension of the classical one. We would also like to remark that the target state $\ket{z}$ can be replaced with an arbitrary target subspace $S \subseteq \calH$: in this case, the process will stop as soon as the state collapses to \emph{any} of the states within $S$. This can be achieved by replacing $\Pi_z$ with the projector onto $S$.

\begin{proof}[Proof of Theorem~\ref{thm:quantum-ht-formula}]
    We are going to use the law of total expectation on the outcome of the first measurement. Let $E$ be the event occurring when the first measurement makes the state collapse to $\ket{z}$. Conditioning on such event we have:
    \begin{align*}
        h_z(\rho) = \Expect{T_z(\rho) \given E} \Prob{E} + \Expect{T_z(\rho) \given \bar{E}} \Prob{\bar{E}} ,
    \end{align*}
    where $\Prob{E} = \tr(\Pi_z \rho)$ is the probability of measuring $\ket{z}$ from the starting state $\rho$ using the observable $O_z$. When $E$ occurs, we measure our target state in zero steps, implying that
    \begin{align*}
        \Expect{T_z(\rho) \given E} = 0 .
    \end{align*}
    If $E$ does not occur, then the state will collapse to the post-measurement state
    \begin{align*}
        \rho \mapsto \frac{\Pi_{-z} \rho \Pi_{-z}}{\tr(\Pi_{-z} \rho)} = \frac{\calP_{-z}(\rho)}{\tr(\calP_{-z}(\rho))} .
    \end{align*}
    Then the algorithm will apply the step procedure $\calE$ once to this state, and the process will start over, by the 
    Markov property. Therefore, the second conditional expectation can be expressed as:
    \begin{align*}
        \Expect{T_z(\rho) \given \bar{E}} = 1 + h_z\left( \calE\bigg( \frac{\calP_{-z}(\rho)}{\tr(\calP_{-z}(\rho))} \bigg) \right) .
    \end{align*}
    In order to simplify the proof from now on, we make an educated guess: we assume that $h_z(\rho)$ is linear in $\rho$. Notice that this only adds constraints to our functional equation, therefore, any solution we find is also valid in the original setting \footnote{This constraint is actually not necessary to conclude the proof. However, assuming that the function is linear significantly simplifies the exposition.}. The equation above can be rewritten as:
    \begin{align*}
        h_z(\rho) & = \Pr[\bar{E}] + h_z(\calE \circ \calP_{-z} (\rho)) = \tr(\calP_{-z} (\rho)) + h_z(\calE \circ \calP_{-z} (\rho))
    \end{align*}
    and, by developing this recurrence $N$ times,  we obtain that:
    \begin{align*}
        h_z(\rho) & = \sum_{k = 0}^N \tr(\calP_{-z} \circ (\calE \circ \calP_{-z})^k (\rho)) + h_z((\calE \circ \calP_{-z})^{N+1} (\rho)) .
    \end{align*}
    By exploiting the fact that $\calP^2_{-z} \equiv \calP_{-z}$, the second term becomes, for $N \ge 2$:
    \begin{align*}
        h_z((\calE \circ \calP_{-z})^{N+1} (\rho)) & = h_z\bigg( \calE \circ (\calP_{-z} \circ \calE \circ \calP_{-z})^N (\rho) \bigg) = h_z\bigg( \calE \circ \calE_{-z}^N (\rho) \bigg) .
    \end{align*}
    Taking the limit as $N \rightarrow \infty$, $\calE^N_{-z} \rightarrow \vv{0}$ (by the condition on its eigenvalues), and thus the last term tends to $h_z(0) = 0$, by linearity of $h_z$. Therefore, we find that the hitting time satisfies
    \begin{align*}
        h_z(\rho) & = \sum_{k = 0}^\infty \tr(\calP_{-z} \circ (\calE \circ \calP_{-z})^k (\rho)) \\
        & = \sum_{k = 0}^\infty \tr((\calP_{-z} \circ \calE \circ \calP_{-z})^k \circ \calP_{-z} (\rho)) \\
        & = \sum_{k = 0}^\infty \tr(\calE_{-z}^k(\rho_{-z})) = \tr\left[\sum_{k = 0}^\infty \calE_{-z}^k(\rho_{-z})\right] .
    \end{align*}
    A geometric sum argument concludes the proof.
\end{proof}
\revdtext{Notice that the last expression essentially exploits the well-known fact that, for a non-negative integer random variable, $\Expect{X} = \sum_{k = 0}^\infty \Prob{X \ge k}$. Thus, if we think of the hitting time as an average over the possible directed paths connecting initial and target states, then the $k$-th term $\tr(\calE_{-z}^k(\rho))$ represents a contribution from all the paths of length $\ge k$.}
\section{Generalized Hitting Time}
\label{sec:generalized-hitting-times}
Definition~\ref{def:quantum-ht} states that a measurement with the observable $O_z$ must be carried out at each step. As this procedure removes relevant quantum coherences \revdtext{at each step, this may lead to undesired effects, such as a quantum Zeno effect.} Thus, we further generalize the idea of hitting time as follows: instead of checking at each step, \revdtext{we extract a number of steps $T$ from a fixed discrete probability distribution $\sigma$ over the natural numbers, e.g., using a classical algorithm that samples from $\sigma$. Notice that we will indicate with $A \sim \beta$ the fact that a random variable $A$ is distributed according to the probability distribution $\beta$.} We then run the step procedure $\calE$ for $T$ times before \revdtext{attempting a measurement}. Figure~\ref{fig:generalized-ht-algorithm} gives a  visual representation of the process.

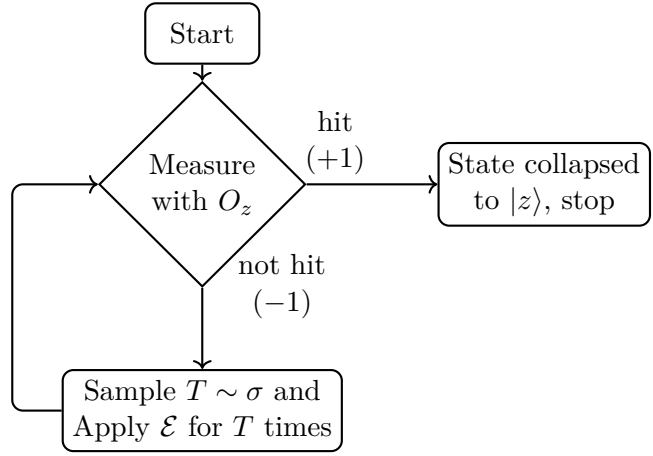
\begin{figure}
    \centering
    \begin{tikzpicture}[scale=0.5]
    \begin{scope}[every node/.style={minimum height=0.8cm,minimum width=1.5cm}]
        \node[draw,rectangle,rounded corners,thick] (st) at (0,0) {Start};
        \node[draw,diamond,thick,align=center] (meas) at (0,-4) {Measure\\with $O_z$};
        \node[draw,rectangle,rounded corners,thick,align=center] (smp) at (0,-10) {Sample $T \sim \sigma$ and\\Apply $\calE$ for $T$ times};
        \node[draw,rectangle,rounded corners,thick,align=center] (stop) at (9,-4) {State collapsed\\to $\ket{z}$, stop};

        \node[align=center] at (2.1, -6.7) {not hit\\($-1$)};
        \node[align=center] at (3.5, -2.9) {hit\\($+1$)};

        \draw[->, thick] (st) -- (meas);
        \draw[->, thick] (meas) -- (smp);
        \draw[->, thick, rounded corners] (smp) -- (-5, -10) -- (-5,-4) -- (meas);

        \draw[->, thick, rounded corners] (meas) -- (stop);
    \end{scope}
\end{tikzpicture}
    \caption{\revdtext{Flow chart summarizing the algorithm exploiting the generalized hitting time. The generalized hitting time $h^\sigma_z(\rho)$ gives the (expected) number of executions of $\calE$ before the measurement gives our desired state $\ket{z}$, starting from the initial state $\rho$. Notice that, in the case of Definition~\ref{def:quantum-ht}, $T \equiv 1$ so $\calE$ is called exactly once per iteration.}}
    \label{fig:generalized-ht-algorithm}
\end{figure}

This approach allows us to apply the classical analysis of hitting times to quantum processes relying on prolonged coherent dynamics, such as Grover's algorithm~\cite{groversalgorithm} (see Section~\ref{sec:grover-algorithm}).
\begin{definition}
    \label{def:generalized-ht}
    Let $\{ X_k \}_{k \in \N}$ be an infinite sequence of independent random variables where $X_k \sim \sigma$. Moreover, define $Q$ as the number of times we measure with $O_z$ without hitting $z$, before succeeding. The generalized $z$-hitting time $h^\sigma_z(\rho)$ with respect to $\sigma$ is the expectation of the following random variable
    \begin{align*}
        T_z^\sigma(\rho) = \sum_{k = 1}^Q X_k .
    \end{align*}
\end{definition}
\noindent \revdtext{Here, $X_k$ represents the number of steps we execute after measuring the $k$-th time with failure. This means that we will execute $\calE^{X_k}$ before attempting the next measurement. Definition~\ref{def:quantum-ht} can be re-obtained by choosing $\sigma$ to be the trivial distribution where $X_k = 1$ with probability~$1$.}
Notice that this generalized definition is only relevant in the quantum case. Indeed, since measurements do not affect the behaviour of a classical walk, not looking at the state at some point of the evolution can only increase the hitting time. We remark that $Q$ can also be $0$, in the case where the first measurement succeeds (we do a measurement at the beginning, like in the basic case). Let us also define the map
\begin{align*}
    \calE^\sigma := \Expect[T \sim \sigma]{\calE^T} = \sum_{t = 0}^\infty \Prob[\sigma]{T = t} \calE^t
\end{align*}
This quantum operation represents the process of sampling a $T \sim \sigma$ and running $\calE$ for $T$ times. We then have the following result.
\begin{theorem}
    \label{thm:generalized-ht-formula}
    Let $(\calH, \calE)$ be a quantum Markov chain and fix a target state $\ket{z}$, and suppose that each eigenvalue $\lambda$ of the map $\calE^\sigma_{-z}$ satisfies $|\lambda| < 1$. The following holds:
    \begin{align*}
        h^\sigma_z(\rho) = \Expect[\sigma]{T} \cdot \tr\left[ (\calI - \calE^\sigma_{-z})^{-1} (\rho_{-z}) \right]
    \end{align*}
\end{theorem}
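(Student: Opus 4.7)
The plan is to reduce the statement to Theorem~\ref{thm:quantum-ht-formula} applied to a carefully chosen auxiliary chain. I would decompose $\Expect{T^\sigma_z(\rho)}$ into two factors: the expected number $\Expect{Q}$ of failed measurements before success, and the expected length $\Expect[\sigma]{T}$ of each inter-measurement block of $\calE$-evolutions. The first factor will then be identified with the hitting time of the auxiliary quantum Markov chain $(\calH,\calE^\sigma)$, which is a legitimate chain because $\calE^\sigma$ is completely positive and trace-preserving as a convex combination of CP/TP maps.

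For the factorisation I would write
\begin{align*}
    T^\sigma_z(\rho) = \sum_{k=1}^\infty X_k\,\mathbbm{1}[Q \ge k] .
\end{align*}
The crucial observation is that the event $\{Q \ge k\}$, which says that the first $k$ measurements all failed, is determined by $\rho$ together with $X_1,\dots,X_{k-1}$ alone, because $X_k$ only governs the evolution taking place \emph{after} the $k$-th measurement. Hence $\mathbbm{1}[Q \ge k]$ is independent of $X_k$, so $\Expect{X_k\,\mathbbm{1}[Q \ge k]} = \Expect[\sigma]{T}\,\Prob{Q \ge k}$; summing in $k$ and using $\Expect{Q} = \sum_{k\ge 1}\Prob{Q \ge k}$ yields $\Expect{T^\sigma_z(\rho)} = \Expect[\sigma]{T}\cdot\Expect{Q}$.

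For $\Expect{Q}$, I would write the unnormalised state after $k$ failures as $\calP_{-z} \circ \calE^{X_{k-1}} \circ \calP_{-z} \circ \cdots \circ \calE^{X_1} \circ \calP_{-z}(\rho)$ and take the expectation over the mutually independent $X_i$'s, so that each intermediate $\calE^{X_i}$ is replaced by $\calE^\sigma$. A short bookkeeping using $\calP_{-z}^2 = \calP_{-z}$ then gives
\begin{align*}
    \Prob{Q \ge k} = \tr\bigl[(\calE^\sigma_{-z})^{k-1}(\rho_{-z})\bigr] ,
\end{align*}
which is exactly the $k$-th failure probability attached to the procedure of Definition~\ref{def:quantum-ht} applied to $(\calH,\calE^\sigma)$. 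Summing the resulting geometric series, justified by the spectral hypothesis on $\calE^\sigma_{-z}$ (the direct analogue of the one in Theorem~\ref{thm:quantum-ht-formula}), gives $\Expect{Q} = \tr\bigl[(\calI - \calE^\sigma_{-z})^{-1}(\rho_{-z})\bigr]$, and multiplying by $\Expect[\sigma]{T}$ produces the claimed formula.

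The main obstacle I expect is the Wald-type factorisation: one must make the probability space explicit enough to justify that each $X_k$ is independent of $\mathbbm{1}[Q \ge k]$, and that the same independence legitimises the averaging of each intermediate $\calE^{X_i}$ into $\calE^\sigma$ inside the formula for $\Prob{Q \ge k}$. Once these independence statements are in place, the rest is essentially a transcription of the geometric-series argument already used in the proof of Theorem~\ref{thm:quantum-ht-formula}, with $\calE^\sigma$ in place of $\calE$.
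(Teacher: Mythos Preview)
Your approach is correct and genuinely different from the paper's. The paper mirrors its own proof of Theorem~\ref{thm:quantum-ht-formula}: it conditions on the outcome of the first measurement and then on the value of $X_1$, makes the same ``educated guess'' that $h^\sigma_z$ is linear in $\rho$, derives the recurrence $h^\sigma_z(\rho) = \Expect[\sigma]{T}\cdot\tr(\calP_{-z}(\rho)) + h^\sigma_z(\calE^\sigma \circ \calP_{-z}(\rho))$, and unfolds it to a geometric series. Your route instead separates the expectation into a Wald-type product $\Expect[\sigma]{T}\cdot\Expect{Q}$ and then recognises $\Expect{Q}$ directly as the hitting time of the auxiliary chain $(\calH,\calE^\sigma)$. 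This buys you two things: you never need the linearity ansatz (which the paper only justifies a posteriori), and the reduction to Theorem~\ref{thm:quantum-ht-formula} is explicit rather than being reproved in parallel. The paper's approach, on the other hand, keeps the two theorems' proofs structurally identical, which may read more smoothly for someone following the narrative. One small wording issue: when you say $\{Q\ge k\}$ is ``determined by $\rho$ together with $X_1,\dots,X_{k-1}$ alone'', you should also include the randomness of the first $k$ measurement outcomes; but since those too are independent of $X_k$, your independence claim and the resulting factorisation are sound.
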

Informally, notice that, if we consider the Markov chain $(\calH, \calE^\sigma)$, the trace operation in the formula of Theorem~\ref{thm:generalized-ht-formula} returns exactly the hitting time of such chain, i.e., the number of times $\calE^\sigma$ is executed before hitting $z$. Then the expectation of $T$ represents the number of executions of $\calE$ per execution of $\calE^\sigma$. Despite this result seems reasonable, a formal argument is required to show that statistical interdependence between $Q$ and $X_k$'s does not influence the hitting time.
\begin{proof}[Proof of Theorem~\ref{thm:generalized-ht-formula}]
    We follow a similar approach as in Theorem~\ref{thm:quantum-ht-formula}. Considering the event $E$ occurring when the first measurement of $O_z$ returns $+1$ and conditioning the expectation on this event we obtain, as in Theorem~\ref{thm:quantum-ht-formula}:
    \begin{align*}
        h_z^\sigma(\rho) & = \Expect{T^\sigma_z(\rho) \given \bar{E}} \Prob{\bar{E}} .
    \end{align*}
    What essentially changes here is the computation of this conditional expectation, since in this case we will apply $\calE^{X_1}$, for $X_1 \sim \sigma$. Hence, let us condition on the value of this variable:
    \begin{align*}
        \Expect{T^\sigma_z(\rho) \given \bar{E}} & = \sum_{t = 0}^\infty \Expect{T^\sigma_z(\rho) \given X_1 = t, \bar{E}} \cdot \Prob[\sigma]{X_1 = t} .
    \end{align*}
    If the measurement does not hit $\ket{z}$, and $t$ steps of $\calE$ are executed, the conditional expectation will be:
    \begin{align*}
        \Expect{T^\sigma_z(\rho) \given X_1 = t, \bar{E}} & = t + h^\sigma_z\left( \calE^t\bigg( \frac{\calP_{-z}(\rho)}{\tr(\calP_{-z}(\rho))} \bigg) \right) .
    \end{align*}
    This because the process will start over, by the Markov property, after applying $\calE$ for $t$ times to the post-measurement state, and we incur a cost of $t$ in terms of hitting times to do so. As in Theorem~\ref{thm:quantum-ht-formula}, we make the educated guess that $h^\sigma_z$ is linear in $\rho$, and this implies that:
    \begin{align*}
        h^\sigma_z(\rho) & = \Prob{\bar{E}} \cdot \sum_{t = 0}^\infty \left[t + h^\sigma_z\left( \calE^t\bigg( \frac{\calP_{-z}(\rho)}{\tr(\calP_{-z}(\rho))} \bigg) \right) \right] \cdot \Prob[\sigma]{X_1 = t} .
    \end{align*}
    Since $\Prob{\bar{E}} = \tr(\calP_{-z}(\rho))$, we can exploit linearity to simplify:
    \begin{align*}
        h^\sigma_z(\rho) & = \sum_{t = 0}^\infty \left[t \cdot \Prob{\bar{E}} + h^\sigma_z( \calE^t \circ \calP_{-z}(\rho)) \right] \cdot \Prob[\sigma]{X_1 = t} .
    \end{align*}
    By splitting the sum, we obtain that the first term is $\Prob{E} \cdot \Expect[\sigma]{T}$ by definition of expectation. Again by linearity, the second term becomes:
    \begin{align*}
        \sum_{t = 0}^\infty h^\sigma_z( \calE^t \circ \calP_{-z}(\rho)) & = h^\sigma_z\left[ \bigg( \sum_{t = 0}^\infty \calE^t \cdot \Prob[\sigma]{X_1 = t} \bigg) \circ \calP_{-z}(\rho) \right] \\
        & = h^\sigma_z ( \calE^\sigma \circ \calP_{-z}(\rho) ) .
    \end{align*}
    To sum up, the recurrence relation for $h^\sigma_z$ is:
    \begin{align*}
        h^\sigma_z(\rho) & = \tr(\calP_{-z}(\rho)) \cdot \Expect[\sigma]{T} + h^\sigma_z ( \calE^\sigma \circ \calP_{-z}(\rho) ).
    \end{align*}
    By developing this recurrence as we did in the proof of Theorem~\ref{thm:quantum-ht-formula}, we obtain
    \begin{align*}
        h^\sigma_z(\rho) & = \Expect[\sigma]{T} \cdot \tr\left[\sum_{k = 0}^\infty (\calE^\sigma)_{-z}^k(\rho_{-z})\right].
    \end{align*}
    A geometric sum argument concludes the proof.
\end{proof}
\revdtext{Also in this case, we can decompose the last expression as a sum over contributions of the trajectories}
\begin{align*}
        h^\sigma_z(\rho) & = \sum_{k = 0}^\infty \Expect[\sigma]{T} \cdot \tr\left[(\calE^\sigma)_{-z}^k(\rho_{-z})\right].
\end{align*}
\revdtext{If we remove $\Expect[\sigma]{T}$ we obtain Definition~\ref{def:quantum-ht} with respect to the chain $(\calH, \calE^\sigma)$. The slight difference is that the paths considered in the $k$-th term of this sum are not of length $k$ anymore. Instead, these trajectories have a length $k T$, where $T$ is distributed according to $\sigma$. The expectation of $T$ takes this random length into account.}
\section{Hitting Times for Grover's Algorithm}
\label{sec:grover-algorithm}
In this section we show how to apply our analysis of hitting times to a well-known problem in the literature: the \emph{unstructured search problem}. Defining $N = 2^n$, the problem consists of a function $f : [N] \rightarrow \{0, 1\}$ given as an oracle, for which there is exactly one value $x_0 \in [N]$ such that $f(x_0) = 1$. Any classical algorithm needs $\Omega(N)$ queries in order to output $x_0$ with non-trivial success probability. However, a famous algorithm by Grover~\cite{groversalgorithm, groversalgorithm2} can find such value within $\bigO(\sqrt{N})$ queries to the oracle.

Roughly speaking, Grover's algorithm starts in the state $\ket{+} := \ket{+}^{\otimes n}$. Using the oracle of $f$, one can construct a unitary $G$ that rotates the state in the plane $\calH^* = \vspan{ \ket{+}, \ket{x_0} }$ by an angle of
\begin{align*}
    2\gamma = 2\arcsin{\frac{1}{\sqrt{N}}} .
\end{align*}
Grover~\cite{groversalgorithm} showed that, after applying $G$ for $r = \Theta(\sqrt{N})$ times, the success probability can be shown to be:
\begin{align*}
    \big|\bra{x_0} G^r \ket{+} \big|^2 = 1 - \bigO\left(\frac{1}{N}\right)
\end{align*}
The original algorithm runs $G$ for $r$ times and then tries a full measurement in the computational basis, making the state collapse to the desired state with the aforementioned probability. Here, we want to use the hitting time framework to recover the same result. As a warm up, let us consider the case where we measure at each step. Therefore, we are considering a quantum Markov chain $(\calH, \calG)$ where $\calH = \vspan{ \ket{x} : x \in \N }$ and
\begin{align*}
    \calG(\rho) = G \rho G^\dag ,
\end{align*}
and we want to compute the hitting time $h_{x_0}(\proj{+})$ with respect to this chain.

\smallskip
\begin{theorem}
    \label{thm:grover-ht-1}
    $h_{x_0}(\proj{+}) = N/4$.
\end{theorem}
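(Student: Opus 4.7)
The plan is to apply the formula of Theorem~\ref{thm:quantum-ht-formula} to the chain $(\calH, \calG)$ with $\rho = \proj{+}$ and $z = x_0$, exploiting the fact that the entire dynamics is confined to the two-dimensional Grover plane $\calH^* = \vspan{\ket{+}, \ket{x_0}}$. Write $\ket{w} := \ket{x_0}$ and let $\ket{w^\perp}$ be the unique (up to phase) unit vector in $\calH^*$ orthogonal to $\ket{w}$, so that $\ket{+} = \sin\gamma \ket{w} + \cos\gamma \ket{w^\perp}$ with $\sin\gamma = 1/\sqrt{N}$. The first step is to observe that the projector $\Pi_{-z} = \id - \proj{w}$, restricted to $\calH^*$, is exactly $\proj{w^\perp}$; consequently the image of $\calP_{-z}$ applied to any density operator supported on $\calH^*$ is a non-negative multiple of $\proj{w^\perp}$.

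Next I would compute the action of $\calE_{-z} = \calP_{-z}\circ\calG\circ\calP_{-z}$ on the one-dimensional subspace of operators spanned by $\proj{w^\perp}$. Since $G$ rotates $\ket{w^\perp}$ in $\calH^*$ by angle $2\gamma$, we have $G\ket{w^\perp} = \cos(2\gamma)\ket{w^\perp} + \sin(2\gamma)\ket{w}$, and therefore
\begin{align*}
\calE_{-z}(\proj{w^\perp}) = \cos^2(2\gamma)\,\proj{w^\perp}.
\end{align*}
This identifies $\proj{w^\perp}$ as an eigen-operator of $\calE_{-z}$ with eigenvalue $\cos^2(2\gamma) < 1$, so the hypothesis on the spectrum of $\calE_{-z}$ is satisfied on the invariant subspace where our computation will live.

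I would then compute $\rho_{-z} = \calP_{-z}(\proj{+})$. Expanding $\ket{+}$ in the basis $\{\ket{w},\ket{w^\perp}\}$ gives $\Pi_{-z}\ket{+} = \cos\gamma\,\ket{w^\perp}$, so $\rho_{-z} = \cos^2\gamma\,\proj{w^\perp}$. Applying the geometric-series inverse along the invariant line yields
\begin{align*}
(\calI - \calE_{-z})^{-1}(\rho_{-z}) = \frac{\cos^2\gamma}{1-\cos^2(2\gamma)}\,\proj{w^\perp} = \frac{\cos^2\gamma}{\sin^2(2\gamma)}\,\proj{w^\perp}.
\end{align*}
Taking the trace and using the double-angle identity $\sin(2\gamma)=2\sin\gamma\cos\gamma$ together with $\sin^2\gamma = 1/N$ gives
\begin{align*}
h_{x_0}(\proj{+}) = \frac{\cos^2\gamma}{4\sin^2\gamma\cos^2\gamma} = \frac{1}{4\sin^2\gamma} = \frac{N}{4},
\end{align*}
as claimed.

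There is no serious obstacle here: the only point that requires a small amount of care is justifying that the computation can be carried out entirely inside the one-dimensional span of $\proj{w^\perp}$ (so that the inversion of $\calI-\calE_{-z}$ reduces to dividing a scalar), and checking that the spectral hypothesis of Theorem~\ref{thm:quantum-ht-formula} is really needed only on this invariant line, since the starting operator $\rho_{-z}$ is already supported there.
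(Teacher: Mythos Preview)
Your proof is correct and in fact cleaner than the paper's own argument. Both proofs begin by restricting to the two-dimensional Grover plane $\calH^*$, but the paper then represents $\calI$, $\calG$, and $\calP_{-x_0}$ as $4 \times 4$ matrices (operators on $\hom(\calH^*)$) and hands the resulting formula to a symbolic algebra package (Mathematica) to obtain $N/4$. You instead exploit the key observation that $\Pi_{-z}|_{\calH^*} = \proj{w^\perp}$, so the image of $\calP_{-z}$ is the single line $\mathbb{R}\cdot\proj{w^\perp}$; this collapses the $4\times 4$ linear algebra to the scalar recurrence $\calE_{-z}(\proj{w^\perp}) = \cos^2(2\gamma)\,\proj{w^\perp}$, and the geometric series becomes a one-line trigonometric identity. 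Your route is more elementary, fully self-contained, and makes transparent \emph{why} the answer is $1/(4\sin^2\gamma)$, whereas the paper's approach treats the computation as a black box. Your closing remark about the spectral hypothesis is also well taken: on the full Hilbert space $\calE_{-z}$ acts unitarily on the orthogonal complement of $\calH^*$ and thus has unit-modulus eigenvalues there, so the hypothesis of Theorem~\ref{thm:quantum-ht-formula} is only literally satisfied after the restriction --- a point the paper handles by restricting first and you handle by noting that convergence of the Neumann series is only needed on the invariant line containing $\rho_{-z}$.
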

This result basically shows that measuring at each step nullifies all the quantum speed-up given by Grover's algorithm. It is important to remark, however, that this is \emph{not} a classical walk, as we are not doing a full measurement in the computational basis at each step. Indeed, the $\frac{1}{4}$ factor is still not achievable by a classical walk.
\begin{proof}[Proof of Theorem~\ref{thm:grover-ht-1}]
    First of all, notice that both $G$ and $O_z$ are invariant with respect to $\calH^*$, i.e.\ any state in $\calH^*$ will stay in such subspace after an application of the unitary $G$ or a measurement of $O_z$. Since these are the only two operations we are going to apply, we can restrict ourselves to this two-dimensional subspace. Hence, we can analyze the hitting time with respect to the quantum Markov chain $(\calH^*, \calG|_{\calH^*})$, where $\calG|_{\calH^*}$ is the application of the unitary $G$ restricted to $\calH^*$:
    \begin{align*}
        G|_{\calH^*} = 
        \begin{bmatrix}
            \cos 2\gamma & -\sin 2\gamma \\
            \sin 2\gamma & \cos 2\gamma
        \end{bmatrix}
    \end{align*}
    We can represent $\calE, \calI, \calP_{-x_0}$ as $4 \times 4$ matrices, and density matrices as $4$-dimensional vectors, we can apply the formula of Theorem~\ref{thm:quantum-ht-formula} (the trace operator becomes the dot product with the representation of the identity matrix). By doing this computations with a software of symbolic calculation~\cite{Mathematica}, one can check that this formula gives~$\frac{N}{4}$.
\end{proof}
As expected, destroying quantum coherence too often hinders the performance of Grover's algorithm. As a second attempt, we consider the case in which the measurement is carried out at each step only with probability $p$. The analysis of such processes coincides with the notion of the generalized hitting time (Definition~\ref{def:generalized-ht}), where $\sigma$ is the geometric distribution with parameter $p$.

To make use of the formula given by Theorem~\ref{thm:generalized-ht-formula}, we first need to compute a closed form representation for the linear map $\calE^\sigma$. In the case of a geometric distribution, this can be achieved easily as detailed in the following lemma.

\begin{lemma}
    \label{thm:fm-channel-geometric-formula}
    If $\sigma = \mathrm{Geom}(p)$, then
    \begin{align*}
        \calE^\sigma = p \calE \circ (\calI - (1 - p)\calE)^{-1}
    \end{align*}
\end{lemma}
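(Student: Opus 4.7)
The plan is a direct computation: substitute the explicit probability mass function of $\mathrm{Geom}(p)$ into the definition of $\calE^\sigma$, factor out one copy of $\calE$, and recognize the remaining series as a Neumann (operator geometric) series.

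Concretely, I would first recall from the definition
\begin{align*}
    \calE^\sigma = \sum_{t=0}^\infty \Prob[\sigma]{T = t} \calE^t,
\end{align*}
and, using the convention $\Prob[\sigma]{T = t} = p(1-p)^{t-1}$ for $t \ge 1$ (consistent with the interpretation that $T$ counts the number of applications of $\calE$ between two measurements, which is at least one), rewrite this as
\begin{align*}
    \calE^\sigma = \sum_{t=1}^\infty p(1-p)^{t-1} \calE^t
    = p\,\calE \circ \sum_{k=0}^\infty \bigl((1-p)\calE\bigr)^k.
\end{align*}

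The second step is to justify that the operator series $\sum_{k=0}^\infty ((1-p)\calE)^k$ converges to $(\calI - (1-p)\calE)^{-1}$. Since $\calE$ is completely positive and trace-preserving, all of its eigenvalues $\lambda$ satisfy $|\lambda| \le 1$, so each eigenvalue of $(1-p)\calE$ has modulus at most $1-p < 1$ (assuming $p \in (0,1]$; the edge case $p = 1$ forces $T \equiv 1$ and the identity reduces to $\calE^\sigma = \calE$, which matches the claim). Hence the spectral radius of $(1-p)\calE$ is strictly less than $1$, $\calI - (1-p)\calE$ is invertible on $\hom(\calH)$, and the Neumann series converges to its inverse. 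Substituting yields
\begin{align*}
    \calE^\sigma = p\,\calE \circ \bigl(\calI - (1-p)\calE\bigr)^{-1},
\end{align*}
which is exactly the claimed formula.

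I do not expect a genuine obstacle here; the only subtlety worth spelling out is the spectral-radius argument that guarantees invertibility, together with a brief remark fixing the convention on the support of $\mathrm{Geom}(p)$ so that the algebraic manipulation lines up with the definition of $\calE^\sigma$ given above.
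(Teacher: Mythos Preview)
Your proposal is correct and follows essentially the same approach as the paper: substitute the geometric pmf, factor out $p\,\calE$, and sum the Neumann series. The spectral-radius justification you spell out is exactly the remark the paper makes immediately before its proof, so there is no substantive difference.
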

Note that such operation is always well-defined as long as the probability $p$ of measuring is non-zero: since $\calE$ is completely positive and trace-preserving, any eigenvalue $\lambda$ of $\calE$ satisfies $|\lambda| \le 1$, and thus all the eigenvalues $\mu$ of the map $\calI - (1 - p) \calE$ satisfy $|\mu| \ge p > 0$, giving us an invertible map.
\begin{proof}[Proof of Lemma~\ref{thm:fm-channel-geometric-formula}]
    Since $\sigma$ is a geometric distribution, $\Prob[\sigma]{T = t} = p(1 - p)^{t-1}$. Therefore, by definition of $\calE^\sigma$:
    \begin{align*}
        \calE^\sigma & = \sum_{t = 1}^\infty p(1 - p)^{t-1} \calE^t \\
        & = p \calE \circ \sum_{t = 1}^\infty (1 - p)^{t-1} \calE^{t-1} \\
        & = p \calE \circ \sum_{t = 0}^\infty \big[ (1 - p) \calE \big]^t \\
        & = p \calE \circ (\calI - (1 - p) \calE)^{-1}
    \end{align*}
\end{proof}
We can now prove the following general result.
\begin{theorem}
    \label{thm:grover-ht-2}
    Let $(\calH, \calG)$ be a quantum Markov chain describing a Grover search as above, and let $p > 0$ be the probability of measuring at each step. The generalized z-hitting time of such process is
    \begin{align*}
        h_{x_0}^\sigma(\proj{+}) = \Theta\bigg( Np + \frac{1}{p} \bigg)
    \end{align*}
\end{theorem}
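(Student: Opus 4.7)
The plan is to apply Theorem~\ref{thm:generalized-ht-formula} directly, using the same subspace reduction as in the proof of Theorem~\ref{thm:grover-ht-1}. Since both the Grover unitary $G$ and the observable $O_{x_0}$ leave $\calH^* = \vspan{\ket{+}, \ket{x_0}}$ invariant, and the initial state $\proj{+}$ lies in this subspace, the entire measurement/evolution process stays inside $\calH^*$. Hence I would work in the $4$-dimensional space of operators on $\calH^*$, representing $\calG$, $\calI$, $\calP_{-x_0}$ (and hence $\calG_{-x_0}$) as $4\times 4$ matrices acting on vectorized density matrices, exactly as was done for the base case.

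Next, I would obtain a closed form for $\calG^\sigma$ by invoking Lemma~\ref{thm:fm-channel-geometric-formula}, which gives
\begin{align*}
    \calG^\sigma = p\, \calG \circ (\calI - (1-p)\calG)^{-1}.
\end{align*}
Together with $\Expect[\sigma]{T} = 1/p$ for the geometric distribution, Theorem~\ref{thm:generalized-ht-formula} yields
\begin{align*}
    h^\sigma_{x_0}(\proj{+}) = \frac{1}{p}\, \tr\!\left[(\calI - \calG^\sigma_{-x_0})^{-1}(\proj{+}_{-x_0})\right],
\end{align*}
with $\calG^\sigma_{-x_0} := \calP_{-x_0} \circ \calG^\sigma \circ \calP_{-x_0}$. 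I would then carry out the required matrix inversion and trace computation symbolically (as in Theorem~\ref{thm:grover-ht-1}), using only that, on $\calH^*$, the matrix elements of $G$ are expressible in terms of $\sin 2\gamma$ and $\cos 2\gamma$ with $\gamma = \arcsin(1/\sqrt{N})$. This produces an explicit rational function of $p$ and $\sin\gamma, \cos\gamma$.

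Finally, I would extract the asymptotic behaviour as $N \to \infty$ by expanding $\sin 2\gamma = 2/\sqrt{N} + O(N^{-3/2})$ and $\cos 2\gamma = 1 - 2/N + O(N^{-2})$. The goal is to isolate the dominant contributions as a function of $p$ and verify that the expression is bounded above and below by constant multiples of $Np + 1/p$. As a consistency check, at $p=1$ the formula must recover Theorem~\ref{thm:grover-ht-1} ($h = N/4$), and at $p = \Theta(1/\sqrt{N})$ it must attain its minimum value $\Theta(\sqrt{N})$, reproducing Grover's optimal scaling.

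The main obstacle is the asymptotic analysis: while the symbolic inversion is routine, one has to carefully track the interplay between the small parameter $\gamma \sim 1/\sqrt{N}$ and the measurement probability $p$, which itself may vary with $N$. The two regimes (small $p$, where coherent Grover rotation dominates and gives the $1/p$ term; large $p$, where frequent measurements decohere the walk and yield the $Np$ term) must be shown to match up to constants across the full range $p \in (0,1]$, and in particular the transition region $p \sim 1/\sqrt{N}$ must be handled without letting either contribution be swamped by subleading terms. Showing matching upper and lower bounds, rather than just an upper bound, is what makes this more than a mechanical computation.
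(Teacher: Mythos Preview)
Your proposal is correct and follows essentially the same approach as the paper: reduce to the two-dimensional invariant subspace $\calH^*$, represent the relevant super-operators as $4\times 4$ matrices, compute $\calG^\sigma$ via Lemma~\ref{thm:fm-channel-geometric-formula}, apply Theorem~\ref{thm:generalized-ht-formula} with $\Expect[\sigma]{T}=1/p$, and evaluate the resulting rational expression symbolically. If anything, you are more careful than the paper about the asymptotic step---the paper simply states the closed form $\tfrac{1}{p}\tfrac{N^2p^2-16Np+16N+16p-16}{8N-4Np}$ and reads off $\tfrac{Np}{8}+\tfrac{2}{p}$ for $p=o(1)$, whereas you rightly flag that the $\Theta$ bound must hold uniformly over $p\in(0,1]$, including the transition region.
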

As a first observation, this claim extends the result of Theorem~\ref{thm:grover-ht-1}, proving that no constant probability $p$ can give quantum speed-up in Grover's search. Secondly, our upper bound is fully consistent with the well-known result about the optimality of Grover's algorithm, proving a lower bound of $\Omega(\sqrt{N})$ quantum queries for the unstructured search problem. Indeed, our upper bound reaches its optimum when we choose $p = \Theta(N^{-1/2})$, achieving an optimal algorithm. \revdtext{A possible implementation of the algorithm on digital quantum processors is shown in Section~\ref{sec:qiskit-implementation}.}
\begin{proof}[Proof of Theorem~\ref{thm:grover-ht-2}]
    As in the proof of Theorem~\ref{thm:grover-ht-1}, we can simply study the two-dimensional quantum Markov chain $(\calH^*, \calG|_{\calH^*})$, since we are still applying the same set of operations to the starting state $\proj{+}$. Again, we represent every super-operator as $4 \times 4$ matrix and quantum states as $4$-dimensional vectors. What changes here is the formula we have to compute: we first use Lemma~\ref{thm:fm-channel-geometric-formula} to compute the matrix representation of $\calE^\sigma$ and then use this to apply Theorem~\ref{def:generalized-ht} (for a geometric distribution, $\Expect[\sigma]{T} = \frac{1}{p}$). With the help of a software for symbolic calculation~\cite{Mathematica}, we obtain that:
    \begin{align*}
        h_{x_0}(\proj{+}) & = \frac{1}{p} \frac{N^2 p^2 - 16Np + 16N + 16p - 16}{8N - 4Np}
    \end{align*}
    For $p = 1$ we obtain $\frac{N}{4}$, as entailed by Theorem~\ref{thm:grover-ht-1}. For $p = o(1)$, this is asymptotically equal to $\frac{Np}{8} + \frac{2}{p}$.
\end{proof}
\section{Traversing the cycle}
\label{sec:hitting-times-cycle}

\begin{figure}
    \begin{subfigure}{0.49\textwidth}
        \centering
        \begin{tikzpicture}[scale=0.5]
    \begin{scope}[every node/.style={circle,thick,draw,minimum size=0.65cm}]
        \node (s0) at (0,0) {$0$};
        \node (s1) at (-2.25,1.5) {};
        \node (sNm) at (-2.25,3.5) {};
        \node[red] (sN) at (0,5) {$N$};
        \node (sNp) at (2.25,3.5) {};
        \node (sL) at (2.25,1.5) {};
    \end{scope}

    \begin{scope}[every node/.style={fill=white},
                every edge/.style={draw=black,very thick}]
        \path [-] (s0) edge (s1);
        \path [-, dashed] (s1) edge (sNm);
        \path [-] (sNm) edge (sN);
        \path [-] (sN) edge (sNp);
        \path [-, dashed] (sNp) edge (sL);
        \path [-] (sL) edge (s0);
    \end{scope}
\end{tikzpicture}
    \end{subfigure}
    \begin{subfigure}{0.49\textwidth}
        \centering
        \begin{tikzpicture}[scale=0.5]
    \begin{scope}[every node/.style={circle,thick,draw,minimum size=0.65cm}]
        \node (s0) at (0,0) {$0$};
        \node (s1) at (-2.25,1.5) {};
        \node (sNm) at (-2.25,3.5) {};
        \node[red] (sN) at (-1,5) {\footnotesize $N$};
        \node (sN1) at (1,5) {};
        \node (sNp) at (2.25,3.5) {};
        \node (sL) at (2.25,1.5) {};
    \end{scope}

    \begin{scope}[every node/.style={fill=white},
                every edge/.style={draw=black,very thick}]
        \path [-] (s0) edge (s1);
        \path [-, dashed] (s1) edge (sNm);
        \path [-] (sNm) edge (sN);
        \path [-] (sN) edge (sN1);
        \path [-] (sN1) edge (sNp);
        \path [-, dashed] (sNp) edge (sL);
        \path [-] (sL) edge (s0);
    \end{scope}
\end{tikzpicture}
    \end{subfigure}
    
    \caption{Quantum walk on a $2N$-cycle (left), and on a $(2N+1)$-cycle (right). We start from the state $\ket{0}$ (with coin register on $\ket{{\uparrow}}$, and we want to reach the state $\ket{N}$.}
    \label{fig:quantum-walk-even-cycle}
\end{figure}
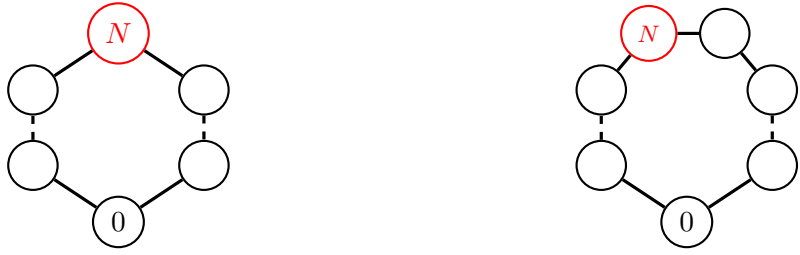

In this section we consider a different example, showing that results of Section~\ref{sec:hitting-times} can be exploited not only to derive analytical expressions for hitting times, but also to provide numerical estimations. Consider a cyclic structure with $2N$ sites such as the one depicted in  Fig.~\ref{fig:quantum-walk-even-cycle} (left) and a unitary Markov chain $(\calH, \calU)$ whose Hilbert space can be decomposed as $\calH = \calH_C \otimes \calH_S$ where $\calH_S = \vspan{ \ket{x} : x \in \Z_{2N} }$ denotes the nodes along the cycle, and $\calH_C = \vspan{ \ket{{\uparrow}}, \ket{{\downarrow}} }$ denotes a coin register. This setup is known in the literature as a \emph{coined unitary walk}~\cite{kempeintro}, where the coin register is used to move to different sites in superposition. In our case, $\calU$ applies a unitary of the form $U = S (H_C \otimes \id_S)$, where $H$ is the Hadamard gate, and $S$ moves from $\ket{x}$ to $\ket{x+1}$ if the coin is in state $\ket{{\uparrow}}$, and to $\ket{x-1}$ if the coin is in state $\ket{{\downarrow}}$. Therefore, the unitary $U$ acts on the total state as follows:
\begin{align*}
    U \ket{{\uparrow}, x} & = \frac{\ket{{\uparrow}, x+1} + \ket{{\downarrow}, x-1}}{\sqrt{2}} \\
    U \ket{{\downarrow}, x} & = \frac{\ket{{\uparrow}, x+1} - \ket{{\downarrow}, x-1}}{\sqrt{2}}
\end{align*}
We start from the state $\ket{{\uparrow}, 0}$ and we want to reach the opposite state in the cycle, namely the state $\ket{N}$ (regardless of the state of the coin register). We can apply the formula from Theorem~\ref{def:quantum-ht}, where the projector outside the subspace of $\ket{N}$ is
\begin{align*}
    \Pi_{-N} = \id_C \otimes (\id_S - \proj{N})
\end{align*}
Using \emph{numpy}~\cite{harrisnumpy}, we can numerically compute the matrices for $\calU, \calP_{-N}$, and the formula can be calculated using matrix algebra, just like we did in Section~\ref{sec:grover-algorithm}, taking $\sigma = Geom(p)$ and $p = \frac{1}{100}, \frac{2}{100}, \ldots, 1$. \revdtext{The results of these computations are summarized in the plots of Fig.~\ref{fig:cycle-numerical-plots}. We can observe some interesting behaviours}: in the case where $p = 1$ (which coincides with the regular hitting time), numerical estimations give an hitting time scaling as $N^2$, which is consistent with the symmetric classical case (see Gambler's Ruin~\cite{norrismarkovchains}). On the other hand, the optimal choice for $p$ seems to depend on the parity of $N$: while an odd $N$ always yields a quadratic hitting time, for even values of $N$ there exist a $p$ that achieves a hitting time scaling linearly with the size of the cycle. We repeated these computations for a cycle of odd length (see Fig.~\ref{fig:quantum-walk-even-cycle}, right) and, unlike the even case, under optimal $p$ any value of $N$ led to a linear hitting time.

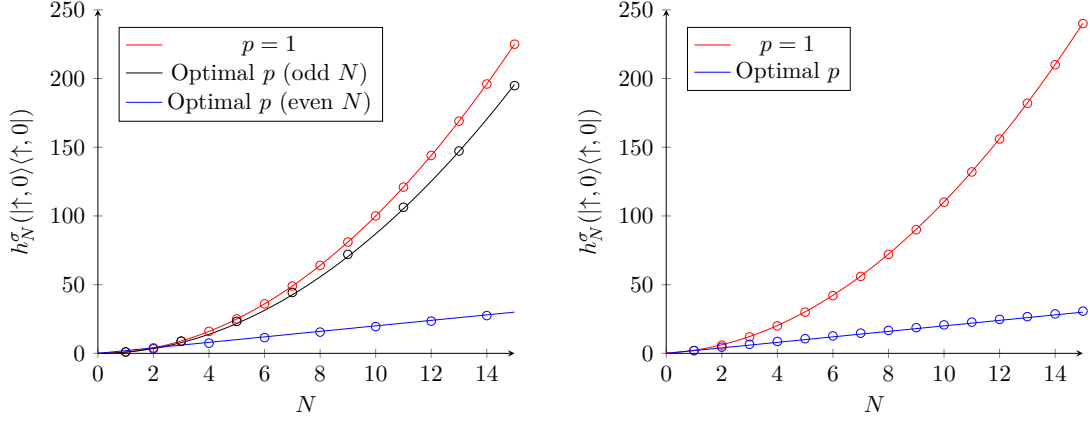
\begin{figure*}
    \begin{subfigure}{0.49\textwidth}
    \centering
        \begin{tikzpicture}[scale=0.8]
    \begin{axis}[
        legend style={at={(0.05,0.95)},anchor=north west},
        axis lines = left,
        ymin=0,
        ymax=250,
        xlabel = \(N\),
        ylabel = {\(h^\sigma_{N}(\proj{{\uparrow}, 0})\)},
    ]
            
    \addplot[domain=0:15, samples=100, color=red] {x^2};
    \addlegendentry{$p = 1$}
    \addplot[domain=0:15, samples=100, color=black] {x^2/1.15};
    \addlegendentry{Optimal $p$ (odd $N$)} \addplot[domain=0:15, samples=100, color=blue] {2*x};
    \addlegendentry{Optimal $p$ (even $N$)}
    
    \addplot[only marks, color=red, mark=o]
        coordinates {
            (1, 1.0)
            (2, 3.9999999999999973)
            (3, 8.999999999999984)
            (4, 15.999999999999936)
            (5, 24.999999999999833)
            (6, 35.99999999999961)
            (7, 48.99999999999913)
            (8, 63.99999999999814)
            (9, 80.99999999999639)
            (10, 99.99999999999429)
            (11, 120.99999999999017)
            (12, 143.99999999998454)
            (13, 168.99999999997794)
            (14, 195.99999999996783)
            (15, 224.9999999999553)
        };
            
    \addplot[only marks, color=blue, mark=o]
        coordinates {
            (2, 3.313758865248225)
            (4, 7.425664739884359)
            (6, 11.453770491803013)
            (8, 15.466096256683983)
            (10, 19.47368421052515)
            (12, 23.479790575914098)
            (14, 27.48207253885456)
        };
        
    \addplot[only marks, color=black, mark=o]
        coordinates {
            (1, 1.0)
            (3, 8.784782608695632)
            (5, 23.246031746031555)
            (7, 44.3121374045796)
            (9, 72.00067669172626)
            (11, 106.31296296295525)
            (13, 147.25176470586007)
            (15, 194.81934306565472)
        };
    \end{axis}
\end{tikzpicture}
    \end{subfigure}
    \begin{subfigure}{0.49\textwidth}
        \centering
        \begin{tikzpicture}[scale=0.8]
    \begin{axis}[
        legend style={at={(0.05,0.95)},anchor=north west},
        axis lines = left,
        ymin=0,
        ymax=250,
        xlabel = \(N\),
        ylabel = {\(h^\sigma_{N}(\proj{{\uparrow}, 0})\)},
    ]
            
    \addplot[domain=0:15, samples=100, color=red] {{((2*x+1)/2)^2}};
    \addlegendentry{$p = 1$}
    \addplot[domain=0:15, samples=100, color=blue] {2*x};
    \addlegendentry{Optimal $p$}
    
    \addplot[only marks, color=red, mark=o]
        coordinates {
            (1, 1.9999999999999996)
            (2, 5.999999999999993)
            (3, 11.999999999999972)
            (4, 19.9999999999999)
            (5, 29.999999999999705)
            (6, 41.999999999999375)
            (7, 55.99999999999857)
            (8, 71.99999999999747)
            (9, 89.99999999999514)
            (10, 109.99999999999281)
            (11, 131.99999999998835)
            (12, 155.9999999999829)
            (13, 181.9999999999722)
            (14, 209.99999999996226)
            (15, 239.9999999999456)
        };
    
    \addplot[only marks, color=blue, mark=o]
        coordinates {
            (1, 1.9999999999999996)
            (2, 4.495208173971728)
            (3, 6.590476190476152)
            (4, 8.633447363659812)
            (5, 10.6579590068791)
            (6, 12.673993371240405)
            (7, 14.685538284397916)
            (8, 16.694122465785885)
            (9, 18.700045100359876)
            (10, 20.70607376000837)
            (11, 22.70984924961037)
            (12, 24.71490644843236)
            (13, 26.71568881795163)
            (14, 28.71973377703356)
            (15, 30.725488470839263)
        };
    \end{axis}
\end{tikzpicture}
    \end{subfigure}

    \caption{Plots of the hitting time for the opposite side of a cycle of length $2N$ (left) and of length $2N + 1$ (right) with respect to $N$. The circles are the numerical computations of the formula given by Theorem~\ref{thm:generalized-ht-formula}, while the continuous lines are the regressions. In the left plot, when $p = 1$ the regression gives $N^2$, while for the optimal $p$ we obtain $\sim 2N$ (even $N$) and $\sim N^2/1.15$ (odd $N$). In the right case, any $N$ yields a linear hitting time with the optimal $p$.}
    \label{fig:cycle-numerical-plots}
\end{figure*}

\section{Implementation with digital quantum circuits}
\label{sec:qiskit-implementation}

\begin{figure}
    \centering
    \input{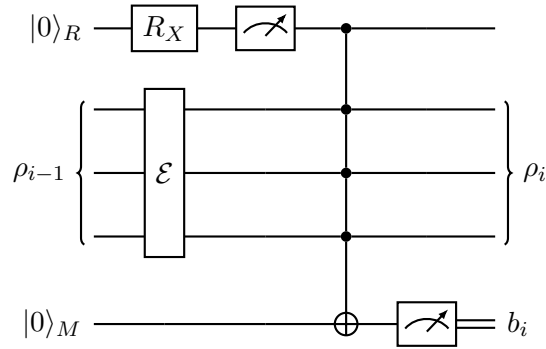}
    \caption{Circuit implementing a step with measurement for the generalized hitting time. \revdtext{$R_X$ rotates the state of the ancilla so that the probability of measuring $1$ is exactly $p$. $b_i$ is the bit recording whether the target state was hit at step $i$. The walk is stopped whenever $\calE(\rho_{i-1})$ is in the target state $\ket{111}$ and qubit $R$ is found in state $\ket{1}$ (such that the measurement is performed).}}
    \label{fig:qiskit-circuit}
\end{figure}

The generalized hitting-time paradigm as shown in Section~\ref{sec:generalized-hitting-times} consists of interrupting the execution of a quantum circuit as soon as (1) the measurement takes place (this happens with probability $p$), and (2) the observable for the hit yields a $+1$, making the state collapse to the target subspace.

Although this paradigm seems conceptually simple, such a hybrid process is not straightforward to implement on current quantum computing platforms. In fact, one essentially needs a classical process that interacts with the quantum device in an \emph{online} fashion, conditioning the execution on the measurement outcomes. In principle, this represents a possible intriguing use case for dynamic circuits~\cite{Corcoles2021,Cross_2022}.

In this section we show how to carry out hitting time experiments in the standard unitary quantum circuit model using Qiskit~\cite{qiskit}. To be completely general, we assume that is not possible to conditionally interrupt the execution of a quantum circuit, hence we fix a number $N$ of steps to run and, besides the $n$ qubits needed to store the state of our walk, we also allocate $N$ classical bits $\{ b_i \}_i$, where $b_i$ stores the measurement outcome at the $i$-th step. Two extra qubits are used to emulate the classical coin flip which decides whether to try the measurement or not ($q_R$), and to carry out the measurement ($q_M$). The quantum circuit implementing a single step is depicted in Figure~\ref{fig:qiskit-circuit}. Here we use $R_X$ to rotate $q_R$ by an angle $\theta = \arcsin{\sqrt{p}}$, so that the measurement yields $\ket{1}$ with probability $p$. The measurement qubit $q_M$ will be flipped if $q_R$ is in state $\ket{1}$ and $\rho$ is in the state $\ket{111}$, which is the target state in our toy example. One could change the target state by simply replacing the \revdtext{$4$-way Toffoli gate $T$} with $U \cdot T \cdot U^\dagger$, where $U$ is a unitary mapping $\ket{111}$ onto a different vector. After the measurement, $b_i$ will tell whether the walk should stop at the $i$-th step.
\revdtext{The circuit of Figure~\ref{fig:qiskit-circuit} transforms the state as follows:}
\begin{align*}
    \rho_i & = \tr_R\big[ T \cdot R_X (\proj{00}_{RM} \otimes \calE(\rho_{i-1})) R_X^\dag \cdot T^\dag \big] \\
    & = (1 - p) \calE(\rho_{i-1}) \otimes \proj{0}_M \\
    & + p [\calP_z(\calE(\rho_{i-1})) \otimes \proj{0}_M
    + \calP_{-z}(\calE(\rho_{i-1})) \otimes \proj{1}_M]
\end{align*}
Running this circuit $N$ times will yield $N$ classical bits, from $b_1$ to $b_N$, and the hitting time can be computed as the smallest $i$ such that $b_i = 1$ \footnote{To be consistent with Definition~\ref{def:generalized-ht}, a measurement has to be done also at the beginning, yielding $b_0$, so that the experimental hitting time can be also zero like its theoretical counterpart.}. We used this construction to run Grover's algorithm as described in Section~\ref{sec:grover-algorithm}. All numerical simulation confirmed the predictions of Theorem~\ref{thm:grover-ht-2}: \revdtext{Figure~\ref{fig:qiskit-plot} shows four plots for different choices of $N$, comparing the average hitting time of $1000$ executions and the theoretical expectation computed analytically.}

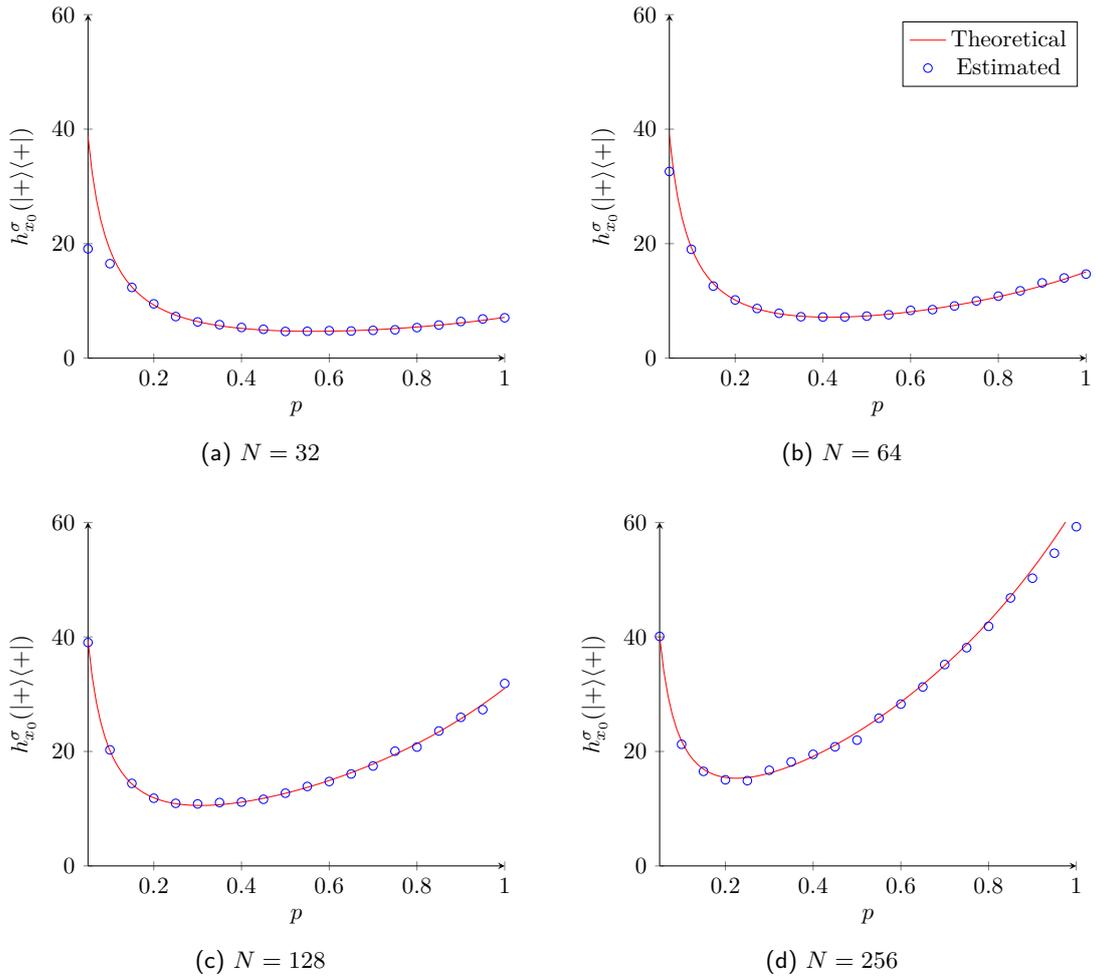
\begin{figure*}
    \centering
    \begin{subfigure}{0.49\textwidth}
    \centering
    \begin{tikzpicture}[scale=0.8]
        \begin{axis}[
            axis lines = left,
            ymin=0,
            ymax=60,
            xlabel = \(p\),
            ylabel = {\(h^\sigma_{x_0}(\proj{+})\)},
        ]
            
        \addplot [
            domain=0.05:1, 
            samples=100, 
            color=red,
        ] {1/x * ((32 * x)^2 + 16*32 - 20*32*x + 16*x) / (8*32 - 4*32*x)};
            
        \addplot [
            only marks,
            color=blue,
            mark=o,
        ]
        coordinates {
            (0.05, 19.094)(0.1, 16.502)(0.15, 12.35)(0.2, 9.487)(0.25, 7.261)(0.3, 6.327) (0.35, 5.85)(0.4, 5.361)(0.45, 5.039)(0.5, 4.665)(0.55, 4.691)(0.6, 4.805)(0.65, 4.737)(0.7, 4.854)(0.75, 4.949)(0.8, 5.338)(0.85, 5.783)(0.9, 6.402)(0.95, 6.847)(1, 7.049)
            };
        \end{axis}
    \end{tikzpicture}
    \caption{$N = 32$}
\end{subfigure}
\hfill
\begin{subfigure}{0.49\textwidth}
    \centering
    \begin{tikzpicture}[scale=0.8]
        \begin{axis}[
            axis lines = left,
            ymin=0,
            ymax=60,
            xlabel = \(p\),
            ylabel = {\(h^\sigma_{x_0}(\proj{+})\)},
        ]
            
        \addplot [
            domain=0.05:1, 
            samples=100, 
            color=red,
        ]
        {1/x * ((64 * x)^2 + 16*64 - 20*64*x + 16*x) / (8*64 - 4*64*x)};
        \addlegendentry{Theoretical}
            
        \addplot [
            only marks,
            color=blue,
            mark=o,
        ]
        coordinates {
            (0.05, 32.605)(0.1, 19.011)(0.15, 12.59)(0.2, 10.166)(0.25, 8.67)(0.3, 7.833) (0.35, 7.233)(0.4, 7.182)(0.45, 7.19)(0.5, 7.341)(0.55, 7.58)(0.6, 8.346)(0.65, 8.477)(0.7, 9.115)(0.75, 9.994)(0.8, 10.829)(0.85, 11.764)(0.9, 13.148)(0.95, 14.008)(1, 14.684)
        };
        \addlegendentry{Estimated}
        \end{axis}
    \end{tikzpicture}
    \caption{$N = 64$}
\end{subfigure}
\vskip\baselineskip
\begin{subfigure}{0.49\textwidth}
    \centering
    \begin{tikzpicture}[scale=0.8]
        \begin{axis}[
            axis lines = left,
            ymin=0,
            ymax=60,
            xlabel = \(p\),
            ylabel = {\(h^\sigma_{x_0}(\proj{+})\)},
        ]
            
        \addplot [
            domain=0.05:1, 
            samples=100, 
            color=red,
        ]
        {1/x * ((128 * x)^2 + 16*128 - 20*128*x + 16*x) / (8*128 - 4*128*x)};
            
        \addplot [
            only marks,
            color=blue,
            mark=o,
        ]
        coordinates {
            (0.05, 39.025)(0.1, 20.276)(0.15, 14.422)(0.2, 11.83)(0.25, 10.939)(0.3, 10.834) (0.35, 11.075)(0.4, 11.155)(0.45, 11.643)(0.5, 12.711)(0.55, 13.885)(0.6, 14.747)(0.65, 16.066)(0.7, 17.461)(0.75, 20.041)(0.8, 20.764)(0.85, 23.575)(0.9, 25.958)(0.95, 27.301)(1, 31.869)
        };
    \end{axis}
    \end{tikzpicture}
    \caption{$N = 128$}
\end{subfigure}
\hfill
\begin{subfigure}{0.49\textwidth}
    \centering
    \begin{tikzpicture}[scale=0.8]
        \begin{axis}[
            axis lines = left,
            ymin=0,
            ymax=60,
            xlabel = \(p\),
            ylabel = {\(h^\sigma_{x_0}(\proj{+})\)},
        ]
            
        \addplot [
            domain=0.05:1, 
            samples=100, 
            color=red,
        ]
        {1/x * ((256 * x)^2 + 16*256 - 20*256*x + 16*x) / (8*256 - 4*256*x)};
            
        \addplot [
            only marks,
            color=blue,
            mark=o,
        ]
        coordinates {
            (0.05, 40.096)(0.1, 21.247)(0.15, 16.519)(0.2, 15.047)(0.25, 14.899)(0.3, 16.709) (0.35, 18.171)(0.4, 19.495)(0.45, 20.815)(0.5, 21.975)(0.55, 25.805)(0.6, 28.269)(0.65, 31.265)(0.7, 35.176)(0.75, 38.124)(0.8, 41.849)(0.85, 46.814)(0.9, 50.262)(0.95, 54.625)(1, 59.248)
        };
        \end{axis}
    \end{tikzpicture}
        
    \caption{$N = 256$}
\end{subfigure}
    \caption{Plots of the concurrent hitting time with respect to the step-wise measurement probability $p$. The red line is the theoretical expectation as computed in the proof of Theorem~\ref{thm:grover-ht-2}, while the blue circles are the estimations computed by running the circuit with the given $p$ for 1000 independent times.}
    \label{fig:qiskit-plot}
\end{figure*}
\bigskip
\section{Conclusions}

In this work we \revdtext{described and developed} a general way to formalize quantum random walks, including both coherent dynamics and classical randomness. Within this framework, the notion of concurrent hitting time, originally given only for unitary walks~\cite{kempehittingtimes}, can be naturally extended \revdtext{using the interpretation of quantum measurements as CP maps, and an explicit formula -- directly related to a well-known classical counterpart -- can be derived. Such results were used to study some paradigmatic examples of quantum walks, including Grover's algorithm, both analytically and numerically. A few concluding remarks are in order.}

\revdtext{First, it is worth observing that, for an for a $N$-state walk, computing the hitting time through Theorem~\ref{thm:quantum-ht-formula} and its generalisations requires the manipulation of $N^2 \times N^2$ matrices -- representing the related CP maps -- thus requiring $\bigO(N^6)$ arithmetic operations with a naive approach. Possible future works could hence investigate more efficient numerical approaches with the aim of treating larger problem instances.}

\revdtext{Concerning our analysis of Grover's algorithm, it is important to notice that, while our formulation is in principle fully equivalent to its traditional form~\cite{groversalgorithm, groversalgorithm2}, the process of Figure~\ref{fig:generalized-ht-algorithm} never discards the quantum state upon failure but always restarts from the post-measurement one, i.e.\ only one copy of the initial state is necessary. This feature could be useful in practical implementations and, more in general, for extending the search algorithm to an oblivious amplitude amplification scheme.}

\revdtext{Finally, in the setting of the cycle we showed that reaching the opposite site takes a quadratic time if we use the basic protocol where measurements are taken at each step, just like in a classical balanced walk. Instead, the situation changes if we use the protocol associated to the  generalized hitting time with a geometric distribution of parameter $p$. In particular, for an odd cycle (i.e.\ of length $2N + 1$), one can always find a suitable choice for $p$ such that the opposite side is hit in linear time. On the contrary, for an even cycle of length $2N$ no $p$ achieves better than quadratic time when $N$ is odd. These results closely resemble what Kendon and Tregenna report in Ref.~\cite[Sec.~V]{kendontregennanumerical} for mixing times. This is not surprising if one recalls that a tight connection between hitting and mixing times exists already in classical setting~\cite{levinMarkovChainsMixing2017}.}

\revdtext{Overall, our results suggest that a complete theory of quantum walks and of the tools required to analyze their behaviour could be developed further by leveraging some tight analogies with classical Markov chains. Our formula for generalised concurrent hitting times is, in fact, just the tip of the iceberg, compared to the rich toolbox known in the literature to estimate and bound classical hitting times. As possible next steps, it could for example be interesting to extend our analysis towards drift theory and the study of random decline~\cite{he2004drift, doerrmultdrift}.}

\section*{Acknowledgements}
We thank David Sutter, Sabrina Maniscalco and Stefan Wolf for insightful discussions and feedback. We acknowledge the use of IBM Quantum services for this work. IBM, the IBM logo, and ibm.com are trademarks of International Business Machines Corp., registered in many jurisdictions worldwide. Other product and service names might be trademarks of IBM or other companies. 
The current list of IBM trademarks is available at \url{https://www.ibm.com/legal/copytrade}. LL acknowledges support from the Swiss National Science Foundation (SNF), grant No.~\texttt{200020\_214808}.

\bibliographystyle{quantum}
\bibliography{refs}

\end{document}